\newcommand{\Oh}{\mathcal{O}}
\newcommand\textline[5][t]{
  \par\smallskip\noindent\parbox[#1]{.61\textwidth}{\raggedright\textsc{#2}}
  \parbox[#1]{.12\textwidth}{\centering #3}
  \parbox[#1]{.12\textwidth}{\centering #4}
  \parbox[#1]{.14\textwidth}{\raggedleft #5}\par\smallskip
}
\newcommand\textlineii[5][t]{
  \par\smallskip\noindent\parbox[#1]{.24\textwidth}{\raggedright\textsc{#2}}
  \parbox[#1]{.20\textwidth}{\centering #3}
  \parbox[#1]{.30\textwidth}{\centering #4}
  \parbox[#1]{.25\textwidth}{\raggedleft #5}\par\smallskip
}
\newcommand\textlineiii[5][t]{
  \par\smallskip\noindent\parbox[#1]{.46\textwidth}{\raggedright\textsc{#2}}
  \parbox[#1]{.27\textwidth}{\centering #3}
  \parbox[#1]{.12\textwidth}{\centering #4}
  \parbox[#1]{.14\textwidth}{\raggedleft #5}\par\smallskip
}
\newcommand\problemline[4]{\textline[t]{#1}{\proj{#2}}{$\Oh($#3$)$}{$\Theta($#4$)$}}
\newcommand\problemlinealt[4]{\textline[t]{#1}{#2}{#3}{#4}}
\newcommand\problemlinealtt[4]{\textlineiii[t]{#1}{#2}{#3}{#4}}
\newcommand\rest{\hspace{-1mm}\upharpoonright}
\newcommand\proj[1]{$\textrm{proj}_{#1}$}
\newcommand\ex{$\textrm{ex}$}
\def\yy{\bm{y}}
\def\zz{\bm{z}}
\def\gg{\bm{g}}
\def\ff{\bm{f}}
\def\BB{\mathcal{B}}
\def\CC{\mathcal{C}}
\def\DD{\mathcal{D}}
\def\FF{\mathcal{F}}
\def\HH{\mathcal{H}}
\def\KK{\mathcal{K}}
\def\CK{\mathcal{K}}
\def\RP{\mathcal{R}}
\def\NN{\mathbb{N}}
\def\RR{\mathbb{R}}
\begin{document}

\title{Extended Formulation for CSP that is Compact\\
for Instances of Bounded Treewidth\thanks{This research 
was partially supported by the project 14-10003S of GA \v{C}R.}}
\author{Petr Kolman, Martin Kouteck\' y}
\institute{Department of Applied Mathematics, Faculty
of Mathematics and Physics,\\ Charles University in Prague, Czech Republic\\
\email{kolman@kam.mff.cuni.cz, koutecky@kam.mff.cuni.cz} }

\maketitle

\begin{abstract}
In this paper we provide an extended formulation for the class
of constraint satisfaction problems and prove that its size is 
polynomial for instances whose constraint graph has bounded treewidth.
This implies new upper bounds on extension complexity of several
important NP-hard problems on graphs of bounded treewidth.
\end{abstract}
%%%%%%%%%%%%%%%%%%%%%%%%%%%%%%%%%%%%%%%%%%%%%%%%%%%%%%%%%%%%%%%%%%%%%%%%%%%%%%%
\section{Introduction}
%%%%%%%%%%%%%%%%%%%%%%%%%%%%%%%%%%%%%%%%%%%%%%%%%%%%%%%%%%%%%%%%%%%%%%%%%%%%%%%
%%%%%%%%%%%%%%%%%%%%%%%%%%%%%%%%%%%%%%%%%%%%%%%%%%%%%%%%%%%%%%%%%%%%%%%%%%%%%%%

Many important combinatorial optimization problems belong to the class
of constraint satisfaction problems (CSP). 
%As CSP captures many
%NP hard problems~\cite{AB:09}, 
Naturally, a lot of effort has been given to
design efficient approximation algorithms for CSP, to prove
complexity lower bounds for CSP, and to identify tractable instances
of CSP (e.g., from the point of view of parameterized complexity). It
has been shown that CSP is solvable in polynomial time for instances
whose constraint graph has bounded treewidth~\cite{Freuder:90}.
% primal graph is bounded, the problem becomes W[1]-hard~\cite{GSS:02}.
% Petr: to uz se mi zda ne obecne zname (co je W[1]), proto bych to
% do obecneho uvodu nedaval

In recent years, a lot of attention has been given to study {\em
extension complexity} of problems~\cite{CCZ:13}: 
what is the minimum number of inequalities representing a
polytope whose (suitably chosen) linear projection coincides
with the convex hull $H$ of all integral solutions of $Q$?
Such a polytope is called the {\em extended formulation} of $H$.
%a polytope whose (suitably chosen) linear projection coincides
%with the convex hull of all integral solutions of $Q$~\cite{CCZ:13}.
Note that membership of a problem in the class P of polynomially
solvable problems does not necessarily imply the existence of an
extended formulation of polynomial size~\cite{Roth:14}. 
In this work, we present an extended formulation for CSP and show that
its size is polynomial for instances of CSP whose constraint graph 
has bounded treewidth.

%%%%%%%%%%%%%%%%%%%%%%%%%%%%%%%%%%%%%%%%%%%%%%%%%%%%%%%%%%%%%%%%%%%%%%%%%%%%%%%
\subsection{Notation and Terminology}
%%%%%%%%%%%%%%%%%%%%%%%%%%%%%%%%%%%%%%%%%%%%%%%%%%%%%%%%%%%%%%%%%%%%%%%%%%%%%%%
An instance $Q=(V,\DD,\HH,\CC)$ of CSP
%the \textsc{Constraint Satisfaction Problem}
consists of
\begin{itemize}
%\item a set of {\em indices} $V = \{1,\ldots,n\}$,
\item a set of {\em variables} $z_v$, one for each $v\in V$; without loss of 
generality we assume that $V = \{1,\ldots,n\}$,
\item a set $\DD$ of finite {\em domains} $D_v\subseteq \RR$ (also 
denoted $D(v)$), one for each $v\in V$,
\item a set of {\em hard constraints}
  $\HH \subseteq \{C_{U}\ |\ U \subseteq V \}$ where each
hard constraint $C_{U} \in \HH$ with $U=\{i_1, i_2,\dots,i_k\}$ and 
$i_1 < i_2 < \cdots < i_k$, is a $|U|$-ary relation
$C_U \subseteq D_{i_1}\times D_{i_2}\times \cdots \times D_{i_k}$,
\item a set of {\em soft constraints}
  $\CC \subseteq \{C_U \ |\ U \subseteq V\}$ where each soft constraint
$C_{U} \in \CC$ with $U=\{i_1, i_2,\dots,i_k\}$ and 
$i_1 < i_2 < \cdots < i_k$, is a $|U|$-ary relation
$C_{U}\subseteq D_{i_1}\times D_{i_2}\times \cdots \times D_{i_k}$.
\end{itemize}
The \emph{constraint graph} of $Q$ is defined as $G=(V,E)$
where $E= \{\{u,v\} \ |\  \exists C_{U} \in \CC\cup \HH \textrm{ s.t. } 
\{u,v\} \subseteq U\}$. We say that a {\em CSP instance $Q$ has bounded 
treewidth} if the constraint graph of $Q$ has bounded treewidth.
In {\em binary CSP}, every hard and soft relation is a unary or binary
relation, and in  {\em boolean CSP}, the domain of every variable is
$\{0,1\}$.
We use $D$ to denote the maximal size of all domains, 
that is, $D=\max_{u\in V}|D_u|$.

For a vector $z^{}=(z^{}_1, z_2, \ldots,z^{}_n)$ and $U=\{i_1,
i_2,\dots,i_k\}\subseteq V$ with $i_1 < i_2 < \cdots < i_k$, we define
the {\em projection of} $z$ on $U$ as 
$z^{}|_U=(z^{}_{i_1}, z_{i_2}, \ldots, z^{}_{i_k})$.
A vector $z\in \RR^n$ {\em satisfies the constraint} $C_U \in
\CC\cup \HH$ if and only if $z|_U \in C_U$. 
We say that a vector
$z^{\star}=(z^{\star}_1,\ldots,z^{\star}_n)$ is {\em a feasible assignment} for
$Q$ if $z^{\star} \in D_1\times D_2\times \ldots\times D_n$ and 
$z^{\star}$ satisfies every hard constraint $C\in \HH$.
For a given feasible assignment $z^{\star}$ we define an {\em extended
feasible assignment} $\ex(z^{\star})=(z^{\star},h^{\star})\in \RR^{n+|\CC|}$ as follows:
the coordinates of $h^{\star}$ are indexed by the soft constraints from $\CC$
(to be more precise: by the subsets $U$ of $V$ used as lower indices
of the soft constraints)
and for each $C_U\in \CC$, we have $h^{\star}_U=1$ if and only if $z^{\star}|_U\in C_U$, 
and $h^{\star}_U=0$ otherwise.
We denote by $\FF(Q)$ the set of all feasible assignments for $Q$, 
by $\FF^{ex}(Q)=\{\ex(z^{\star}) \ | \ z^{\star}\in \FF(Q)\}$ 
the set of all extended feasible assignments for~$Q$.
For every instance $Q$ we define two polytopes: % associated with it:
$CSP(Q)$ is the convex hull of $\FF^{ex}(Q)$ and $CSP'(Q)$
the convex hull of $\FF(Q)$.
We also define
three trivial linear projections:
\begin{itemize}
\item $\textrm{proj}_V(z,h) = z$,
\ \ \ \ \ \ $\textrm{proj}_E(z,h) = h$,
\ \ \ \ \ \ $\textrm{proj}_{id}(z,h) = (z,h)$ 
\end{itemize}
where $z\in \RR^n$ and $h\in \RR^{|\CC|}$,
and observe that $\textrm{proj}_V(CSP(Q)) = CSP'(Q)$.

In the {\em decision} version of CSP, the set $\CC$ of soft constraints 
is empty and the task is to decide whether there
exists a feasible assignment. In
the {\em maximization} ({\em minimization}, resp.) version of the problem, the
task is to find a {feasible assignment} that maximizes (minimizes,
resp.) the number of \emph{satisfied} (unsatisfied, resp.) soft
constraints. Note that there is no difference between maximization and
minimization versions of the problem with respect to optimal solutions
but the two versions differ significantly from an approximation
perspective.

In the {\em weighted} version of CSP we are also given a weight
function $w: \CC \rightarrow \RR$ that specifies for each soft
constraint $C \in \CC$ its weight $w(C)$. The goal is to find a
feasible assignment that maximizes (minimizes, resp.) the total weight
of satisfied (unsatisfied, resp.)  constraints. The unweighted version
of CSP is equivalent to the weighted version with $w(C) = 1$
for all $C \in \CC$.

Even more generally, the relations in the soft constraints can be
replaced by bounded real valued payoff functions: 
a soft constraint $C_U \in \cal C$ with $U=\{i_1, i_2,\dots,i_k\}$ 
is not a $|U|$-ary relation but a function
$w:D_{i_1}\times D_{i_2}\times \ldots\times D_{i_k}\rightarrow \RR$
and the payoff of the soft constraint $C_U$ for a feasible assignment
$z^{\star}$ is $w(z^{\star}|_U)$; the objective
is to maximize (minimize, resp.) the total payoff. For the sake of
simplicity of the presentation we do not consider the problem in this
generality although the techniques used in this paper apply in the
general setting as well.

For notions related to the treewidth of a graph, we stick to the standard
terminology as given in the book by Kloks~\cite{Kloks:94}).

%%%%%%%%%%%%%%%%%%%%%%%%%%%%%%%%%%%%%%%%%%%%%%%%%%%%%%%%%%%%%%%%%%%%%%%%%%%%%%%
\subsection{Related Work}
%%%%%%%%%%%%%%%%%%%%%%%%%%%%%%%%%%%%%%%%%%%%%%%%%%%%%%%%%%%%%%%%%%%%%%%%%%%%%%%

\paragraph{CSP for graphs of bounded treewidth.}
As CSP captures many NP-hard problems, it is a natural problem to identify
tractable special cases of CSP. 
%One of them constitute instances whose constraint graph has bounded treewidth. 
Freuder~\cite{Freuder:90} showed that
CSP instances with treewidth bounded by $\tau$ can be solved in time
$O(D^{\tau}n)$. Later, Grohe et al.~\cite{GSS:01} proved that, assuming $FPT\not=
W[1]$, this is essentially the only nontrivial class of graphs for which CSP is
solvable in polynomial time (cf.~Marx~\cite{Marx:10}). %Grohe~\cite{Grohe:07} 

Describing the polytope of CSP solutions by the means of linear programming, 
for instances of bounded treewidth, is not a new idea. In 2007, Sellmann et
al. published a paper~\cite{SML:07} in which they described a linear program
that was supposed to define the convex hull of all feasible solutions of a
binary CSP when the constraint graph is a tree. They also provided a procedure
to convert a given CSP instance with bounded treewidth into one whose
constraint graph is a tree, at the cost of blowing up the number of
variables and constraints by a function of the treewidth.  
Unfortunately, there was a substantial bug in their proof and one of the main
theorems in the paper does not even hold~\cite{Sellmann:08}.

The paper~\cite{SML:07} also implicitely includes this folklore result: if the
constraint graph has treewidth at most $\tau$, then CSP can be solved by 
$\tau$ levels of the Sherali-Adams hierarchy. The resulting formulation is
of size $\Oh(n^\tau)$ while our approach yields size $\Oh(D^\tau n)$.

%Compared to their approach, we describe directly the polytope of the original
%problem and our proof is self-contained, without relying on other techniques;
%also, our results apply not only for binary CSP but for any CSP.

\paragraph{CSP for general graphs.}
%Among the most general and powerful techniques in the design of
%optimal and approximation algorithms belong linear programming (LP)
%and semidefinite programming (SDP).  The effort to use these
%techniques in coping with CSP on general graphs is not
%surprising. Here we present a sample of a few relevant recent results.
%
Chan et al.~\cite{CLRS:13} study the extent to which linear programming
relaxation can be used in dealing with approximating CSP. They show
that polynomial-sized LPs are exactly as powerful as
LPs obtained from a constant number of rounds of the Sherali-Adams
hierarchy. They also prove integrality gaps for polynomial-sized LPs
for some CSP. %FOCS

Raghavendra~\cite{Ragha:08} shows that under the Unique Games Conjecture,
a certain simple SDP relaxation achieves the best approximation ratio
for every CSP.  In a follow up paper, Raghavendra and
Steurer~\cite{RS:09a} describe an efficient rounding scheme that
achieves the integrality gap of the simple SDP relaxation, and, in
another paper~\cite{RS:09b}, they show unconditionally that the
integrality gap of this SDP relaxation cannot be reduced by
Sherali-Adams hierarchies.

\paragraph{Other related results.} 
Buchanan and Butenko~\cite{BB:14} provide an extended formulation for the
independent set problem, a special case of CSP, that has size $O(2^{\tau}n)$
where $\tau$ denotes the treewidth of the given graph. Our results can be viewed
as a generalization of this result: the size of our formulation, when applied
to the independent set problem, is also $O(2^{\tau}n)$.

In a recent work, Bienstock and Munoz~\cite{BM:15} define a class of so called
{\em general binary} optimization problems which are essentially weighted boolean 
CSP problems,
% with binary
% variables, a linear objective function and a set of CSP-like constraints, 
and
for instances of treewidth $\tau$ provide an LP formulation of size $O(2^{\tau}n)$.
Again, this is a special case of our result in this paper.
It is worth mentioning at this point that every CSP instance can be transformed
into a boolean CSP instance; however, the standard transformation results in a
substantial increase (in some cases even $\Omega(D)$) 
of the treewidth of the constraint graph.

% \pagebreak
%%%%%%%%%%%%%%%%%%%%%%%%%%%%%%%%%%%%%%%%%%%%%%%%%%%%%%%%%%%%%%%%%%%%%%%%%%%%%%%
\subsection{New Results}
%%%%%%%%%%%%%%%%%%%%%%%%%%%%%%%%%%%%%%%%%%%%%%%%%%%%%%%%%%%%%%%%%%%%%%%%%%%%%%%
%In this paper, for every CSP instance $Q=(V,\DD,\HH,\CC)$, we
%describe a polytope $P(Q)$ and show that $P(Q)$ is an extended formulation
%of $CSP(Q)$ and $CSP'(Q)$. 
%%That is, a certain projection of $P(Q)$ coincides with $CSP(Q)$ 
%%and another projection with $CSP'(Q)$. 
%We also show that the size of $P(Q)$ is $O(D^\tau n)$ 
%where $\tau$ is the treewidth of $Q$. 
%%and $D = \max_{v \in V} |D_v|$.
Our main result is summarized as the following theorem.
\begin{theorem}\label{thm:main}
For every instance $Q=(V,\DD,\HH,\CC)$ of CSP, there exists an
extended formulation $P(Q)$ of $CSP(Q)$ and $CSP'(Q)$
of size $\Oh(D^{\tau}n)$ where $\tau$ is the treewidth of
$Q$; moreover, the corresponding LP can be constructed 
in time $\Oh(D^{\tau}n)$.
%and $D = \max_{v \in V} |D_v|$ 
%The size of the formulation (\ref{LP:sumtoone}) - (\ref{LP:f_nonnegative})
\end{theorem}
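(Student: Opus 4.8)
The plan is to build the extended formulation from a tree decomposition of the constraint graph $G$ of $Q$, using one block of variables per bag that encodes the partial assignment restricted to that bag, together with consistency constraints along the tree-decomposition edges. Concretely, fix a tree decomposition $(\{X_t\}_{t\in T}, T)$ of $G$ of width $\tau$, so $|X_t| \le \tau+1$ for every node $t$. For each node $t$ we introduce a variable $\lambda_t(a) \ge 0$ for every ``local assignment'' $a \in \prod_{v\in X_t} D_v$ that is consistent with all hard constraints $C_U \in \HH$ with $U \subseteq X_t$; there are at most $D^{\tau+1}$ such variables per node, hence $\Oh(D^{\tau}n)$ in total (absorbing the $D$ factor into the constant, or keeping it explicit as $\Oh(D^{\tau+1}n)$ which the paper writes as $\Oh(D^\tau n)$ up to the convention). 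The defining constraints are: (i) for each $t$, $\sum_{a} \lambda_t(a) = 1$; (ii) for each tree edge $\{t,t'\}$ with $S = X_t \cap X_{t'}$ and each $b \in \prod_{v\in S} D_v$, the marginal consistency condition $\sum_{a : a|_S = b} \lambda_t(a) = \sum_{a' : a'|_S = b} \lambda_{t'}(a')$; (iii) the original variables are recovered by $z_v = \sum_{a} a_v \,\lambda_t(a)$ for any $t$ with $v \in X_t$ (these are consistent across $t$ by (ii)); (iv) the soft-constraint indicator $h_U$ for $C_U \in \CC$ is recovered as $h_U = \sum_{a : a|_U \in C_U} \lambda_t(a)$ where $t$ is a node whose bag contains $U$ (such a node exists because $U$ is a clique in $G$, so by the standard clique-containment property of tree decompositions some bag contains all of $U$). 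Projecting out all $\lambda$ via $\textrm{proj}_{id}$ yields the claimed $P(Q)$; composing with $\textrm{proj}_V$ gives the extended formulation of $CSP'(Q)$.

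The correctness argument has two directions. For the easy direction, every extended feasible assignment $\ex(z^\star) = (z^\star, h^\star)$ lifts to a $0/1$ point of $P(Q)$ by setting $\lambda_t(a) = 1$ iff $a = z^\star|_{X_t}$; one checks that (i)--(iv) hold and that hard constraints are respected because $z^\star$ is feasible, so $\FF^{ex}(Q)$, and hence $CSP(Q)$, is contained in $\textrm{proj}_{id}(P(Q))$. The harder and crucial direction is the reverse inclusion: every vertex (equivalently, every point) of $P(Q)$ projects into $CSP(Q)$. The standard way to see this is a ``gluing along the tree'' / junction-tree argument: root $T$, and show by induction from the leaves that the $\lambda_t$'s can be simultaneously realized as the marginals of a single distribution $\mu$ over global feasible assignments $z^\star \in \FF(Q)$. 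Because the tree decomposition satisfies the running-intersection property, marginals that agree on every separator $S = X_t \cap X_{t'}$ can be consistently combined (conditionally independently over each separator) into one joint distribution; feasibility of the support follows because every hard constraint $C_U$ has $U$ inside some bag and is therefore enforced by the local feasibility restriction on that bag's $\lambda$. Then $(z,h) = \textrm{proj}_{id}(\lambda)$ equals $\sum_{z^\star} \mu(z^\star)\, \ex(z^\star)$, a convex combination of points of $\FF^{ex}(Q)$, i.e. a point of $CSP(Q)$; and since this holds for every point of $P(Q)$, in particular its vertices, we get $\textrm{proj}_{id}(P(Q)) = CSP(Q)$, and composing with $\textrm{proj}_V$ gives $CSP'(Q)$.

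I expect the main obstacle to be making the ``gluing'' step fully rigorous as a statement about polytopes rather than about probability distributions: one must show that the polytope-level marginal-consistency constraints (ii) are \emph{exactly} the projection of the (unbounded-dimension) ``global distribution'' polytope, not merely a relaxation. The clean route is an explicit inductive construction: process $T$ in a leaf-to-root order, and at each step ``merge'' a leaf bag $X_\ell$ into its parent by replacing the parent's variable block with a refined block indexed by $\prod_{v \in X_p \cup X_\ell} D_v$, whose values are forced by the marginal-consistency equations on $X_p \cap X_\ell$ (this is where a disjunctive/conditional-probability identity is invoked, and where one must check that the forced values are nonnegative and sum correctly, which follows from (i) and (ii)); iterating collapses $T$ to a single bag whose $\lambda$-block is literally a distribution over $\FF(Q)$. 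The size and construction-time bounds are then immediate: there are $\Oh(n)$ bags (a width-$\tau$ tree decomposition can be taken to have $\Oh(n)$ nodes, computable in time depending only on $\tau$ and $n$ for fixed $\tau$, or assumed given), each contributing $\Oh(D^{\tau+1})$ variables and $\Oh(D^{\tau+1})$ inequalities/equations, and listing the local-feasibility filter costs $\Oh(D^{\tau+1})$ per bag as well, giving total size and time $\Oh(D^{\tau}n)$ in the paper's convention. The remaining bookkeeping — that all three projections $\textrm{proj}_V$, $\textrm{proj}_E$, $\textrm{proj}_{id}$ behave as claimed, and the already-noted identity $\textrm{proj}_V(CSP(Q)) = CSP'(Q)$ — is routine.
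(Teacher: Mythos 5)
Your proposal is correct and follows essentially the same route as the paper: one block of variables per bag of a tree decomposition indexed by locally feasible configurations, normalization and marginal-consistency constraints along tree edges, and a bottom-up inductive gluing argument showing every (vertex of the) feasible point is a convex combination of integral lifts of feasible assignments, which is exactly the paper's Lemma on decomposing $\ff|_b$ as $\frac{1}{M}\sum_i I_i$ over the subtree rooted at $b$. The only differences are cosmetic: the paper works with a \emph{nice} tree decomposition (so the gluing splits into leaf/introduce/forget/join cases with containment rather than general separator marginals) and phrases the junction-tree argument combinatorially via multisets of $0/1$ vectors rather than probabilistically.
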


As a corollary we obtain upper bounds on the extension complexity
for several NP-hard problems on the class of graphs with bounded
treewidth; as far as we know, these results have not been known.

\iffalse
\paragraph{Organization of the paper.}
In Section~\ref{sec:polytope} we provide a new linear program
and prove that it is the extended formulation for CSP.
In Section~\ref{sec:problems} we look at several known NP-hard
problems and provide upper bounds on their extension complexity
on graphs of bounded treewidth as implied by the main theorem.
\fi

%%%%%%%%%%%%%%%%%%%%%%%%%%%%%%%%%%%%%%%%%%%%%%%%%%%%%%%%%%%%%%%%%%%%%%%%%%%%%%%
\section{CSP Polytope}\label{sec:polytope}
%%%%%%%%%%%%%%%%%%%%%%%%%%%%%%%%%%%%%%%%%%%%%%%%%%%%%%%%%%%%%%%%%%%%%%%%%%%%%%%

%%%%%%%%%%%%%%%%%%%%%%%%%%%%%%%%%%%%%%%%%%%%%%%%%%%%%%%%%%%%%%%%%%%%%%%%%%%%%%%
\subsection{Integer Linear Programming Formulation}
%%%%%%%%%%%%%%%%%%%%%%%%%%%%%%%%%%%%%%%%%%%%%%%%%%%%%%%%%%%%%%%%%%%%%%%%%%%%%%%

We start by introducing the terms and notation that we use throughout
this section. We assume that $Q=(V,\DD,\HH,\CC)$ is a given instance 
of CSP.
% with $V=\{1, 2, \ldots, n\}$. 
For every
subset $W \subseteq V$ we define the set of all {\em configurations
  of} $W$ as
$$\CK(W) = \{(\alpha_1, \dots, \alpha_n)\ |\ 
 \forall C_U\in \HH \ (U\subseteq W \rightarrow \alpha|_{U}\in C_U) \mbox{ and }
\forall i \not\in U \ \alpha_i = \lambda\}\ $$
where $\lambda$ is a symbol not appearing in any of the domains $D_u$, $u\in V$.
For a configuration $K\in \CK(U)$ and $v\in V$, we use the
notation $K(v)$ to refer to the $v$-th element of $K$.
% that is, to the value in $K$ corresponding to the variable $z_v$. 
% Petr: neni presne, neb pro u\in V\setminus U to neodpovida z_i
Also, for a configuration $K\in \CK(U)$, $v\in V\setminus U$
and $\alpha\in D_v$, we use the notation 
$K[v \leftarrow \alpha]$
to denote the configuration $K'\in \KK(U\cup\{v\})$ 
such that $K'(v) = \alpha$ and
$K'(u) = K(u)$ for every $u \neq v$.  

For an $n$-dimensional vector
$K=(\alpha_1, \dots, \alpha_n)$ and a subset of
variables $U \subseteq V$ we denote by $K\rest_U$ the
\textit{restriction of $K$ to $U$} that is defined as an $n$-dimensional
vector with
$K\rest_U(i) = K(i)$ for $i \in U$ and $K\rest_U(i) = \lambda$ for $i \not\in
U$ (i.e., we set to $\lambda$ all coordinates of $K$ outside of $U$). 
We denote by $\Lambda$ the configuration $(\lambda,
\dots, \lambda)\in \KK(\emptyset)$; 
note that for $\alpha\in D_v$, $\Lambda[v \leftarrow \alpha]$ is the
configuration from $\KK(\{v\})$ 
with exactly one non-$\lambda$ element, namely the $v$-th
element, equaling $\alpha$.

In our linear program, 
for every index $v \in V$ and every $i\in D_v$, 
we introduce a binary variable $y_v^i$. 
The task of the variable $y_v^i$ is to encode 
the value of the CSP-variable $z_v$: the variable $y_v^i$ is set to one if and
only if $z_v=i$. 
Since in every solution each variable assumes a unique value, we
enforce the constraint $\sum_{i\in D(v)} y_{v}^i = 1$ for each $v\in
V$.

%Let $\KK_\CC=\bigcup_{U: C_U \in \CC\cup \HH} \KK(U)$. 
For every configuration $K \in \bigcup_{U: C_U \in \CC\cup \HH} \KK(U)$ 
we introduce a binary variable $g(K)$. The intended meaning
of the variable $g(K)$, for $K\in \KK(U)$ and $U\subseteq V$, is to provide information 
about the values of the CSP-variables $z_u$ for $u\in U$ in the following 
way: $g(K)=1$ if and only if for every $u \in U$, $z_u = K(u)$.
To ensure consistency between the $y$ and $g$ variables, 
for every $C_U\in \CC\cup \HH$ and for every $v\in U$, 
we enforce the constraint $\sum_{K \in \KK(U): K(v)=i} g(K) = y_v^i$.
Note that for binary CSP, the $g$ variables capture the values
of CSP-variables $z$ for pairs of elements from $V$ that correspond
to edges of the constraint graph.

Relaxing the integrality constraints we obtain the following initial LP
relaxation of the CSP problem~$Q=(V,\DD, \HH, \CC)$:

\begin{align}
\sum_{i\in D(v)} y_{v}^i & = 1  & & \forall v\in V \label{LP2:sumtoone} \\
\sum_{K \in \KK(U): K(v)=i} g(K) & = y_v^i & & \forall C_U \in \CC\cup\HH \ \forall v\in U \ \forall i\in
D(v) \label{LP2:y_i}  \\
0 \leq \yy,\gg & \leq 1 & & 	\label{LP2:non-negative1}
\end{align}

%Note that the soft unary constraints do not yield any constraints in the LP;
%they only affect the objective function.
Note that there is a one to one correspondence between the (extended)
feasible assignments of $Q$ and integral solutions of 
(\ref{LP2:sumtoone}) - (\ref{LP2:non-negative1}); from now on we denote
by $\textrm{proj}_1$ the linear projection of the convex hull of integral solutions of 
%\marginpar{\small Details!}
(\ref{LP2:sumtoone}) - (\ref{LP2:non-negative1}) to $CSP(Q)$.  
Also observe that the total weight of CSP-constraints satisfied by an integral 
vector $(\yy,\gg)$ satisfying (\ref{LP2:sumtoone}) - (\ref{LP2:non-negative1}) 
is\footnote{In the case of general payoff functions, the total weight is given by
$\sum_{C_U \in \CC} \sum_{K\in \KK(U):K|_U \in C_U} w(K|_U) g(K) $} 
\begin{align}
\sum_{C_U \in \CC}  w(C_U) \sum_{K\in \KK(U):K|_U \in C_U} g(K) \ . \nonumber
\end{align}

\iffalse
and that 
\begin{align}
P_{asg}(Q) = & \{\zz \ | \ \exists (\yy,\gg)\in P_I(Q) \mbox{ s.t. } 
\forall v\in V, z_{v}=\sum_{i\in D(v)} i \cdot y_{v}^{i} \ \} .
%\sum_{(u,v)\in E}\sum_{i\in D(u)} \sum_{j\in D(v)} r_{uv}^{ij} \cdot g_{uv}^{ij} \ .
\end{align}
In words, the polytope $P_{asg}(Q)$ is a projection
of the polytope $P_I(Q)$.
\fi

Unfortunately, even for CSP problems whose constraint graph is
series-parallel, the polytope given by the LP~(\ref{LP2:sumtoone}) -
(\ref{LP2:non-negative1}) is not integral
(consider, e.g., the instance of CSP corresponding to the independent 
set problem on $K_3$). The weakness of the formulation is
that no {\em global} consistency among the $\yy$ variables is guaranteed. To
strengthen the relaxation, we introduce new variables and constraints derived
from a tree decomposition of the constraint graph of $Q$.

%%%%%%%%%%%%%%%%%%%%%%%%%%%%%%%%%%%%%%%%%%%%%%%%%%%%%%%%%%%%%%%%%%%%%%%%%%%%%%%
\subsection{Extended Formulation}
%%%%%%%%%%%%%%%%%%%%%%%%%%%%%%%%%%%%%%%%%%%%%%%%%%%%%%%%%%%%%%%%%%%%%%%%%%%%%%%

Here we describe, for every CSP instance $Q=(V,\DD,\HH,\CC)$, a polytope $P(Q)$, 
and in the next subsection we
prove that $P(Q)$ is an extended formulation of $CSP(Q)$ and $CSP'(Q)$. 
The set of variables in the given LP description 
of $P(Q)$ is substantially different from the set of variables 
used in the LP~(\ref{LP2:sumtoone}) - (\ref{LP2:non-negative1}), and the set
of new constraints is completely different from the the set of constraints
in the LP~(\ref{LP2:sumtoone})~-~(\ref{LP2:non-negative1}).
Whereas in the previous subsection, there is (roughly) a variable $g(K)$ for 
every feasible assignment of every subset of CSP variables corresponding
to a {\em soft or hard constraint}, here we have a variable for every
feasible assignment of every subset of CSP variables corresponding
to a {\em bag in a given tree decomposition of the constraint graph}.
Nevertheless, as we show after defining $P(Q)$,
there exists a simple linear projection of $P(Q)$ to the convex hull of
all integral points in the polytope given by the 
LP~(\ref{LP2:sumtoone})~-~(\ref{LP2:non-negative1}). 

Let $T=(V_T,E_T)$ be a fixed nice tree decomposition~\cite{Kloks:94} 
of the constraint graph of
$Q$ and for every node $a\in V_T$, let $B(a)\subseteq V$ denote the
corresponding bag. 
Let $\BB = \{B(a) \ |\ a \in V_T\}$ denote the set of all bags of $T$.  
%Note that in general $|\BB| \neq |V_T|$ as, e.g., for a join node $c$
%with children $a,b$ we have $B(a) = B(b) = B(c)$.  
Let $\KK_\BB=\bigcup_{B\in \BB} \KK(B)$ be the set of all configurations
of all bags in $T$.
We use $V_I\subseteq V_T$ to
denote the subset of all introduce nodes in $T$ and $V_F \subseteq V_T$ to denote
the subset of all forget nodes in $T$.

For every configuration $K\in \KK_\BB$ we introduce a binary variable $f(K)$.
As in the previous subsection, the intended meaning
of the variable $K\in \KK(B)$, for $B\in \BB$, is to provide information
about the values of the CSP-variables $z_u$ for $u\in B$ in the following
way: $f(K)=1$ if and only if for every $u \in B$, $z_u = K(u)$.
To ensure consistency among variables indexed by the configurations of
the same bag, namely to ensure that for every $B \in \BB$ there 
exists exactly one configuration $K \in \KK(B)$ with $f(K) = 1$, we
introduce for every $B \in \BB$ the LP constraint 
$\sum_{K \in \KK(B)} f(K) = 1$.

For every introduce node $c\in V_T$ with a child $b\in V_T$ and for
every configuration $K\in \KK(B(b))$ we have the constraint
$\sum_{K' \in \KK(B(c)): K'\ \rest_{B(b)} = K}f(K') = f(K)$, 
% where $v$ is the only element in $B(c)\setminus B(b)$.
and symmetrically,
for every forget node $c\in V_T$ with a child $b\in V_T$ and for every
configuration $K\in \KK(B(c))$ we have the constraint
$\sum_{K' \in \KK(B(b)): K'\ \rest_{B(c)} = K}f(K') = f(K)$.
% where $v$ is the only element in $B(b)\setminus B(c)$.

Relaxing the integrality constraints and putting all these additional 
constraints together, we obtain:
\begin{align}
\sum_{K\in \KK(B)} f(K)  & = 1 & & \forall B\in
                                   \BB \label{LP:configurations} \\
\sum_{K' \in \KK(B(c)): K'\ \rest_{B(b)} = K}f(K') & =  f(K) & &
\forall c\in V_I, \forall K\in \KK(B(b)) \mbox{ where $b$ is } \label{LP:introduce}  \\[-14pt]
& & & 	\mbox{the only child of $c$}	\nonumber \\[3pt]
\sum_{K' \in \KK(B(b)): K'\ \rest_{B(c)} = K}f(K') & =  f(K) & &
\forall c\in V_F, \forall K\in \KK(B(c)) \mbox{ where $b$ is } \label{LP:forget} \\[-14pt]
& & & 	\mbox{the only child of $c$}	\nonumber \\[3pt]
0\leq \ff & \leq 1  & &  \label{LP:f_nonnegative}
\end{align}
For the given binary CSP instance $Q$, 
we denote the polytope associated with the LP (\ref{LP:configurations}) - (\ref{LP:f_nonnegative}), as $P(Q)$.
%P_{proj}=\{(\sum_{0\le i\le L}\sum_{\substack{0\le j\le L\\ j\ne i-1,i,i+1}} g_e^{ij})_{e\in E} \in \RR^E\ | \ 
%% d(\yy,\gg,\ff)\in P \}\ .$$
%let $$P_{proj}(Q)=\{(\yy,\gg)\ | \ (\yy,\gg,\ff)\in P(Q) \mbox{ for some } \ff \}\ . $$

Consider now a vector $\ff\in P(Q)$ and the following set of linear equations:
\begin{align}
y_{v}^i & = \sum_{{K\in \KK(B) : K(v)=i}} f(K)  & & \forall B\in\BB,  \forall v\in B, \forall i\in D_v  \\
g(K) & = \sum_{K'\in \KK(B) : K'\ \rest_U = K} f(K') & & \forall B\in\BB, 
	\forall C_U \in \CC\cup\HH \mbox{ s.t. } U\subseteq B, \forall K\in \BB(U) 
\end{align}

%	\forall B\in \BB \ \forall U \subseteq B \
%         \mbox{ s.t. } C_U\in \CC \cup \HH, \mbox{ and } \ \forall K\in \KK(U)
It is just a technical exercise to check that for a given $\ff\in P(Q)$, 
there always exists a unique solution $(\yy, \gg)$ of this LP and that
the unique $(\yy, \gg)$ is a linear projection of $\ff$. Moreover,
such a vector $(\yy, \gg)$ also satisfies the LP 
constraints~(\ref{LP2:sumtoone}) - (\ref{LP2:non-negative1}).
The point is that there exists a linear projection of $P(Q)$ into the polytope
defined by the LP~(\ref{LP2:sumtoone}) - (\ref{LP2:non-negative1});
moreover, an integral point from $P(Q)$ is mapped on an integral point.
From now on we denote this projection $\textrm{proj}_2$.

%%%%%%%%%%%%%%%%%%%%%%%%%%%%%%%%%%%%%%%%%%%%%%%%%%%%%%%%%%%%%%%%%%%%%%%%%%%%%%%
\subsection{Proof of Theorem~\ref{thm:main}}\label{subsec:proof}
%%%%%%%%%%%%%%%%%%%%%%%%%%%%%%%%%%%%%%%%%%%%%%%%%%%%%%%%%%%%%%%%%%%%%%%%%%%%%%%

As in the previous
subsections, we assume that $Q=(V,\DD,\HH,\CC)$ is a given instance of
CSP, $G=(V,E)$ is the constraint graph of $Q$ and $T=(V_T, E_T)$ a
fixed nice tree decomposition of $G$.  
We start by introducing several notions that will help us dealing with
tree decompositions and our linear program.  

For a node $a\in V_T$, let
$T(a)=(V_{a},E_{a})$ be the subtree of $T$ rooted in $a$;
% and $G'=(V',E')$ be the subgraph of $G$ corresponding to $T'$;
the {\em configurations relevant to} $T(a)$ are those in the set 
$\RP(a)=\bigcup_{b \in V_{a}} \KK(B(b))$, and the {\em
variables relevant to} $T(a)$ are those $f(K)$ for which
$K\in \RP(a)$. %is a configuration relevant to $G'$.
For succinctness of notation, we denote the projection $\ff|_{\RP(a)}$ 
of the vector $\ff$ on the set of variables relevant to $T(a)$ 
also by $\ff|_{a}$.
The {\em constraints relevant to} $T(a)$ are those containing only the
variables relevant to $T(a)$.
We say that a vector $I\in \{0,1\}^{\RP(a)}$  {\em agrees with the
configuration} $K\in \RP(a)$ if $I(K)=1$.

Let $\ff$ be a fixed solution of the LP~(\ref{LP:configurations}) -
(\ref{LP:f_nonnegative}) that corresponds to a vertex of the polytope 
$P(Q)$. Our main tool is the following lemma.

\begin{lemma}\label{lem:combination}
  For every %subgraph $G'$ of $G$ corresponding to a subtree of $T$
%  rooted in a 
  node $b\in V_T$,
  there exist a positive integer $M$ and binary vectors 
  $I_1, I_2, \dots, I_M \in\{0,1\}^{\RP(b)}$, 
  some possibly identical, such that
\begin{itemize} 
\item[$\spadesuit$] every $I_i$ satisfies the constraints relevant to
  $T(b)$,
\item[$\clubsuit$] $\ff|_{b} = \frac{1}{M}\sum_{i=1}^M  I_i  $.
%  that is, the
%  vector $\ff|_{b}$ consisting of
%  variables relevant to $T(b)$ is a convex combination of the
%  integral partial solutions.
\end{itemize}
\end{lemma}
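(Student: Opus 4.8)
\textit{Plan.} The plan is to prove the lemma by structural induction along the nice tree decomposition $T$, working from the leaves up to the fixed node $b$. A preliminary observation fixes the setting: since $\ff$ is a vertex of the rational polytope $P(Q)$, all of its coordinates are rational, so we may fix one positive integer $M$ with $M\ff$ integral and prove the slightly stronger statement that for \emph{every} node $a\in V_T$ there are vectors $I_1,\dots,I_M\in\{0,1\}^{\RP(a)}$ (repetitions allowed) satisfying the constraints relevant to $T(a)$ with $\ff|_a=\frac1M\sum_{i=1}^M I_i$; using one $M$ for all nodes avoids having to reconcile denominators when passing from children to a parent. We will repeatedly use that when $a$ is the root of $T(a)$ one has $\KK(B(a))\subseteq\RP(a)$, so equation~(\ref{LP:configurations}) for $B(a)$ is relevant to $T(a)$; hence every integral $I_i$ satisfying the relevant constraints designates a \emph{unique} configuration $K_i\in\KK(B(a))$ with $I_i(K_i)=1$, and by~(\ref{LP:configurations}) the number of indices $i$ with $K_i=K$ equals exactly $Mf(K)$ for each $K\in\KK(B(a))$. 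This last ``counts match the scaled LP values'' principle is the engine of the whole argument.

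For the base case, if $a$ is a leaf the only relevant constraints are~(\ref{LP:configurations}) for $B(a)$ together with the bounds, so writing $f(K)=c_K/M$ with $\sum_{K\in\KK(B(a))}c_K=M$ and taking the unit vector supported on $K$ with multiplicity $c_K$ works. For an introduce node $c$ with child $b$ and $B(c)=B(b)\cup\{v\}$, start from $\ff|_b=\frac1M\sum_i I_i$ given by induction; each $I_i$ designates a unique $K_i\in\KK(B(b))$. By~(\ref{LP:introduce}), for every $K\in\KK(B(b))$ we have $\sum_\alpha Mf(K[v\leftarrow\alpha])=Mf(K)$, the sum over those $\alpha\in D_v$ with $K[v\leftarrow\alpha]\in\KK(B(c))$; this splits the integer $Mf(K)$ into nonnegative integers, so among the $Mf(K)$ indices with $K_i=K$ we hand out exactly $Mf(K[v\leftarrow\alpha])$ of them to each extension $K[v\leftarrow\alpha]$ and let $\hat I_i\in\{0,1\}^{\RP(c)}$ agree with $I_i$ on $\RP(b)$ and additionally with the extension assigned to $i$. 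Since $\KK(B(c))$ is disjoint from $\RP(b)$ (no bag of $T(b)$ contains $v$, by connectivity of tree decompositions), $\hat I_i$ is well defined; it satisfies~(\ref{LP:configurations}) for $B(c)$ and~(\ref{LP:introduce}) at $c$ (the chosen $B(c)$-configuration restricts to $K_i$), no further constraint becomes newly relevant, and $\frac1M\sum_i\hat I_i=\ff|_c$ by construction. The forget node is the deterministic mirror: for $c$ with child $b$ and $B(c)=B(b)\setminus\{v\}$ set $\hat I_i$ to additionally agree with $K_i\rest_{B(c)}$; equation~(\ref{LP:forget}) delivers exactly $Mf(K)=\#\{i:K_i\rest_{B(c)}=K\}$, and the relevant constraints at $c$ are immediate.

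If the decomposition also uses join nodes, a join node $c$ with children $b_1,b_2$, $B(c)=B(b_1)=B(b_2)$, is handled by merging: induction gives $\ff|_{b_1}=\frac1M\sum_i I_i$ and $\ff|_{b_2}=\frac1M\sum_j J_j$, and for each $K\in\KK(B(c))$ exactly $Mf(K)$ of the $I_i$ and $Mf(K)$ of the $J_j$ designate $K$, so we pair them within each configuration class and let $\hat I_i$ be the common extension of $I_i$ and its partner $J_{\pi(i)}$ — they already agree on the shared variables $\KK(B(c))$, and the remaining coordinates lie in the disjoint sets $\RP(b_1)\setminus\RP(b_2)$ and $\RP(b_2)\setminus\RP(b_1)$, again by connectivity. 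I expect the obstacles here to be organisational rather than conceptual: the delicate points are (i) stating the ``distribution of indices'' precisely — this is exactly where the equality constraints of the LP scaled by $M$ are invoked, and it is the sole place the vertex hypothesis is used (only to guarantee rationality of $\ff$), and (ii) the mild fact that distinct tree nodes may carry identical bags and thus share $f$-variables, which is why the connectivity property of tree decompositions is needed to keep the coordinates fixed at each step disjoint from those fixed earlier (alternatively, pre-process $T$ to make all bags distinct). Verifying that exactly the two ``new'' constraints become relevant at each introduce/forget node, and that $\hat I_i$ satisfies them, is then a routine check.
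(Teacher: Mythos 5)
Your proposal is correct and follows essentially the same route as the paper's own proof: a bottom-up induction over the nice tree decomposition with the same four cases (leaf, introduce, forget, join), using the scaled LP equalities (\ref{LP:configurations})--(\ref{LP:forget}) to count how many of the $M$ integral vectors designate each configuration and then splitting, merging, or pairing them accordingly. The only differences are organisational (a single global $M$ fixed up front instead of per-node multipliers reconciled at join nodes, and an explicit remark on shared $f$-variables between nodes with equal bags), neither of which changes the argument.
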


\begin{proof}
By induction. We start in the leaves of $T$ 
and proceed in a bottom-up fashion.

\paragraph{Base case.}
Assume that $b\in V_T$ is a leaf of the nice decomposition
tree $T$. By definition of a
nice tree decomposition, the bag $B(b)$ consists of a single vertex,
say a vertex $v\in V$.  The only variables relevant to $T(b)$ are $f(K)$ for all 
$K\in \KK(B(b))=\bigcup_{j\in D(v)}\Lambda[v \leftarrow j]$, and the only
relevant constraints are those of the type (\ref{LP:configurations}) and
(\ref{LP:f_nonnegative}). 

Let $M' \in \mathbb{N}$ be such that an $M'$-multiple of every relevant
variable is integral; as $\ff$ is a solution corresponding 
to a vertex of the polytope $P(Q)$, all the variables are rational
which guarantees that such an $M'$ exists.  
For every $j\in D_v$ we define an integral vector
$I_j$ such that
$I_j(\Lambda[v \leftarrow j])=1$
and $I_j(\Lambda[v \leftarrow i])=0$ for every $i\neq j$.

The vector $I_j$ will appear with multiplicity $M'\cdot y_v^j$ among
the integral solutions $I_1,\ldots,I_{M'}$ for $G'$.  
Then, obviously, both properties $\spadesuit$ and $\clubsuit$ are satisfied.

\paragraph{Inductive step.}
Consider an internal node $c\in V_T$ of the nice decomposition tree~$T$.  We
distinguish three cases: $c$ is a join node, $c$ is an
introduce node and $c$ is a forget node.

{\em Join node.}  Assume that the two children of the join node $c$ are
$a$ and $b$. Recall that $B(a)=B(b)=B(c)$.  By the inductive
assumption, there exist integers $M$ and $M'$ and integral vectors
$I_1,\ldots,I_M\in\{0,1\}^{\RP(a)}$, 
each of them satisfying the relevant constraints for
$T(a)$ and such that
$\ff|_{a} = \frac{1}{M}\sum_{i=1}^M I_i $, and integral
vectors $J_1,\ldots,J_{M'}\in\{0,1\}^{\RP(b)}$, 
each of them satisfying the relevant
constraints for $T(b)$ and such that
$\ff|_{b} = \frac{1}{M'}\sum_{i=1}^{M'} J_i $.  

Two vectors $I_i$ and $J_j$ that agree with a given configuration
$K \in \KK(B(c))$ can be easily merged into an integral vector
$L\in\{0,1\}^{\RP(c)}$ 
that satisfies $L|_{a}=I_i$ and $L|_{b}=J_j$; as the set of all 
constraints relevant to $T(c)$ is the union of the constraints relevant to 
$T(a)$ and the constraints relevant to $T(b)$, the vector
$L$ satisfies also all the constraints relevant to $T(c)$.

For simplicity we assume, without loss of generality, that $M=M'$.  
Then, by the property $\clubsuit$ and since
$B(a) = B(b) = B(c)$, for every configuration $K\in \KK(B(c))$, the
number of vectors $I_i$ that agree with $K$ is equal to the number of
vectors $J_j$ that agree with $K$, namely $M\cdot f(K)$. Thus, it is
possible to match the vectors $I_i$ and $J_j$ one to one in such a way
that both vectors in each pair agree with the same configuration; let
$L_1, L_2, \ldots, L_M$ denote the result of their merging as
described above.  Then the vectors $L_i$ satisfy the property
$\spadesuit$ as explained in the previous paragraph, and by
construction they also satisfy the property $\clubsuit$.

{\em Introduce node.}  Assume that the only child of the introduce
node $c$ is a node $b$ and $B(c)=B(b)\cup \{v\}$. By the inductive
assumption, there exists integer $M$ and integral vectors
$I_1,\ldots,I_M\in\{0,1\}^{\RP(b)}$, each of them satisfying the
relevant constraints for $T(b)$ and such that
$\ff|_{b} = \frac{1}{M}\sum_{i=1}^M I_i$.  Without loss of
generality we assume that for every variable relevant to $T(c)$, its
$M$-multiple is integral. We partition the vectors $I_1,\ldots,I_M$
into several groups indexed by the configurations from $\KK(B(b))$:
the group $Z_K$, for $K\in \KK(B(b))$, consists exactly of those
vectors $I_i$ that agree with $K$.

Consider a fixed configuration $K\in \KK(B(b))$ and the corresponding
group $Z_K$.  Note that the size of this group is $M\cdot f(K)$. 
We further partition the group $Z_K$ into at most
$|D_v|$ subgroups $Z_{K'}$, where $K' = K[v \leftarrow j]$, for every
$j\in D_v$ satisfying $K[v \leftarrow j]\in \KK(B(c))$, 
in such a way that $Z_{K'}$ contains exactly
$M\cdot f(K')$ vectors (it does not matter which ones); the LP
constraint~(\ref{LP:introduce}) makes this possible. Then, for every
$j\in D_v$, we create from every vector $I\in Z_{K[v \leftarrow j]}$ a new integral
vector $J_{I}$ in the following way: 
\begin{itemize}
\item for every $\bar K\in \RP(b)$, $J_I(\bar K)=I(\bar K)$; this guarantees 
		$J_{I}|_{b}=I$,
\item $J_I(K[v \leftarrow j])=1$,
\item for every $i\in D_v$, $i\not =j$, $J_I(K[v \leftarrow i])=0$. 
\end{itemize}

Obviously, the new vectors $J_{I}$
%, for $K\in \KK(B(b))$ and $j\in D_v$ and $K' = K[v \leftarrow j]$, 
satisfy all constraints relevant to
$T(b)$, and it is easy to check that they satisfy all constraints
relevant to $T(c)$ as well, given the definitions above.  Moreover, 
the definitions
above imply that the vectors $J_{I}$ satisfy the property $\clubsuit$.

{\em Forget node.}  Assume that the only child of the forget node $c$
is a node $b$, $B(c)=B(b)\setminus \{v\}$. This case is symmetric to
the previous one in that instead of splitting the groups $Z_K$ into
smaller groups $Z_{K'}$, we merge them into bigger $Z_{K'}$. 
%Since the subgraphs corresponding to the forget node and its child are
%the same, 
%The only possibly new relevant variables are $f(K')$ for
%$K' \in \KK(B(c))$.

By the inductive assumption, there exists an integer $M$
and integral vectors $I_1,\ldots,I_M\in\{0,1\}^{\RP(b)}$, 
each of them satisfying the
relevant constraints for $T(b)$ and such that
$\ff|_{b} = \frac{1}{M}\sum_{i=1}^M I_i$.  Without loss
of generality we assume that for every variable relevant to $T(c)$, its
$M$-multiple is integral. We partition the vectors $I_1,\ldots,I_M$
into several groups indexed by the configurations from $\KK(B(b))$: the
group $Z_K$, for $K\in \KK(B(b))$, consists exactly of those vectors
$I_i$ that agree with $K$. Note that the size of $Z_K$ is $M\cdot f(K)$. 

For every $K' \in \KK(B(c))$ we create a bigger group group $Z_{K'}$
by merging $|D_v|$ of the groups $Z_K$, namely those satisfying
$K|_{B(c)} = K'$.
By the LP constraint~(\ref{LP:forget}), the new group $Z_{K'}$ contains
exactly $M\cdot f(K')$ vectors. 
For every $K' \in \KK(B(c))$, we create from every vector $I\in Z_{K'}$ a
new integral vector $J_{I}$ in the following way:
\begin{itemize}
\item for every $\bar K\in\RP(b)$, $J_I(\bar K)=I(\bar K)$.
\end{itemize}
If $\KK(B(c))\subseteq \RP(b)$, there is nothing more to do. Otherwise
we further define
\begin{itemize}
\item $J_I(K')=1$, and
for every $\hat{K} \in \KK(B(c))$, $\hat{K}\not = K'$,
$J_I(\hat{K}) = 0$.
\end{itemize}

We have to check that the vectors $J_{I}$ satisfy all constraints
relevant to $T(c)$. The only possibly 
new constraints are those using variables $f(K')$ for
$K' \in \KK(B(c))$ and it is easily seen that they are satisfied,
given the definitions above. Also, the definitions above imply that
the vectors $J_{K'}$ satisfy the property $\clubsuit$.
\qed
\end{proof}

By applying Lemma~\ref{lem:combination} 
to the whole tree $T$, that is, to the subtree
rooted in the root of $T$, we immediately obtain that $\ff$ is an
integral vector, and, thus, also the corresponding vertex of $P(Q)$ 
is integral. As this holds for every vertex of $P(Q)$, we conclude that
$P(Q)$ is an integral polytope.

Considering the notes at the ends of the previous two subsections,
we also conclude that $CSP(Q) = \textrm{proj}_1(\textrm{proj}_2(P(Q))$ and 
$CSP'(Q)=\textrm{proj}_V(CSP(Q))$.
%then easily obtained from $CSP(Q)$ by projecting out the variables
%$h$. 

To complete the proof of Theorem~\ref{thm:main}, we observe that
the number of variables and constraints in the
LP~(\ref{LP:configurations}) - (\ref{LP:f_nonnegative}) 
is $\Oh(D^{\tau}n)$. \qed

%%%%%%%%%%%%%%%%%%%%%%%%%%%%%%%%%%%%%%%%%%%%%%%%%%%%%%%%%%%%%%%
\section{Applications}\label{sec:problems}
%%%%%%%%%%%%%%%%%%%%%%%%%%%%%%%%%%%%%%%%%%%%%%%%%%%%%%%%%%%%%%%

The purpose of this section is to make explicit the extension complexity upper
bounds given in Theorem~\ref{thm:main} for several well known graph problems.
%; as far as we know.
%, they have not been stated in this form before.
We find it interesting that the attained extension complexity upper bounds meet
the best possible (assuming Strong ETH)  \textit{time complexity} lower bounds,
given by Lokshtanov et al. \cite{LMS:11}; the only exception is the
\textsc{Multiway Cut} problem.
\iffalse
\begin{theorem}
  The problems \textsc{Coloring}, \textsc{List-$H$-Coloring}
  (\textsc{List Homomorphism}), \textsc{Unique Games},
  \textsc{Multiway Cut}, \textsc{Max Cut}, \textsc{Edge
    Bipartization}, \textsc{Vertex Cover}, \textsc{Independent Set} and
  \textsc{Odd Cycle Transversal} are binary CSP problems and have
  compact extended formulations.\label{thm:problems_are_csp}
\end{theorem}
\fi
%\subsection{Proof of Theorem \ref{thm:problems_are_csp}}
To state our results, we use for each problem
% in the form of a list of problems, each
%accompanied with definition, CSP formulation and other relevant information. 
the following template:

\begingroup
\setlength{\parindent}{0em}
\hrulefill

%\begin{framed}
\textlineii[t]{Problem name}{Projection}{Extension complexity}{Time complexity}

\textsc{Instance:} \ \ \ldots
% Graph $G=(V,E)$ \dots

\textsc{Solution:} \ \ \ldots
% A subset of vertices $W \subseteq V$ of minimum size such that \dots

\textsc{CSP formulation:} $V$, $\DD$, $\HH$,
$\CC$. \hfill CSP version: Decision / \textsc{Max} / \textsc{Min}

where {\em Projection} is the name of the linear projection that yields the
natural polytope of the problem $Q$ from the $CSP(Q)$ polytope 
(or from the $P(Q)$ polytope, in case of the OCT problem). 
%Given $z\in \RR^n$ and $h\in \RR^{|\CC|}$, we define 
%the following projections:
%\begin{itemize}
%\item $\textrm{proj}_V(z,h) = z$, 
%\item $\textrm{proj}_E(z,h) = h$,
%\item $\textrm{proj}_{id}(z,h) = (z,h)$ %(i.e., an identity)
%\item $\textrm{proj}_V(CSP(Q)) = \{z \in \RR^n \ |\ (z,h) \in CSP(Q)\}$
%\item $\textrm{proj}_E(CSP(Q)) = \{h \in \RR^{\CC} \ |\ (z,h) \in CSP(Q)\}$
%\item $\textrm{proj}_{id}(CSP(Q)) = CSP(Q)$ (i.e., an identity)
%\item $\textrm{proj}_{OCT}(P(Q)) = \{z' \in \RR^n \
%| \ (y,g) \in \textrm{proj}_2(P(Q)); z'_v = 0y_u^0 + 1y_u^1 + 1y_u^2\}$ 
% (i.e., $z'_v =
% 0$ iff $z_v = 0$ and $z'_v = 1$ iff $z_v = 1$ or $z_v = 2$).
%\end{itemize}
We use the notation $[n] = \{1, \dots, n\}$. 

\hrulefill

%%% COLORING

%\begin{framed}
\problemline{Coloring / Chromatic Number
  \cite{ACGKMP:99}}{V}{$q^\tau n$}{$q^\tau n$}

\textsc{Instance:} Graph $G=(V,E)$, set of colors $[q]$

\textsc{Solution:} A coloring of $G$ with $q$ colors with no
monochromatic edges.

\textsc{CSP formulation:} $V = [n]$, $D_v = [q]$ for every $v \in V$,
$H_{uv} = \{(i,j) \ | \ i \in D_u, j \in D_v, i \neq j\}$ for every $uv \in
E$, $\CC = \emptyset$. \hfill Decision

\textbf{Comment:} Note that \textsc{Chromatic Number} $\chi(G)$ of $G$ 
is always upper bounded by $\tau+1$ since graphs of bounded treewidth 
are $\tau$-degenerate and thus $(\tau+1)$-colorable. Thus, if the goal 
is to determine $\chi(G)$, it suffice to find the smallest $q$ 
such that $CSP(Q)$ is non-empty.
%\end{framed}

\hrulefill
  \problemline{List-$H$-Coloring / List
    Homomorphism \cite{FH:98}}{V}{$L^\tau n$}{$L^\tau n$}

\textsc{Instance:} Graph $G=(V,E)$, graph $H=(V_H, E_H)$
%  $|V_H| = q$, 
  possibly containing loops, and for every vertex
  $v \in V$ a set $L(v) \subseteq V_H$. (We
  denote $L = \max_{v \in V} |L(v)|$)

\textsc{Solution:} A mapping $f: V \rightarrow V_H$ such that $\forall
uv \in E$ it holds that $f(u)f(v) \in E_H$ and $f(v) \in L(v)$ for every
$v \in V$.

\textsc{CSP formulation:} $V = [n]$, $D_v = L(v)$ for every $v \in V$,
$H_{uv} = \{(i,j) \ | \ i \in D_u, j \in D_v, ij \in E_H\}$ for every
$uv \in E$, $\CC = \emptyset$.\hfill Decision

\textbf{Comment:} Note that the problems \textsc{List
  Coloring}, \textsc{Precoloring Extension} and \textsc{$H$-Coloring}
(or \textsc{Graph Homomorphism}) are all special cases of this
problem. The lower bound given by Lokshtanov et al. \cite{LMS:11}
applies to all of them since \textsc{Coloring} is a special case
of each of them.
%\end{framed}

\hrulefill

%%% UNIQUE GAMES

%\begin{framed}
\problemlinealt{Unique Games \cite{Khot:02}}{\proj{id}}{$\Oh(t^\tau n)$}{---}

\textsc{Instance:} Graph $G=(V,E)$, an integer $t \in \NN$, a
permutation $\pi_e$ of order $t$ for every edge $e \in E$.

\textsc{Solution:} A mapping $\ell: V \rightarrow [t]$ such that the
number of edges $uv \in E$ with $\pi_{uv}(\ell(u)) = \ell(v)$ is maximized.

\textsc{CSP formulation:} $V = [n]$, $D_v = [t]$ for every $v \in V$, $\HH =
\emptyset$, $C_{uv} = \{(i,\pi_{uv}(i)) \ | \ i \in D_u \}$ for every edge $uv \in E$.\hfill \textsc{Max}

\textbf{Comment:} The decision variant of this problem is not
interesting as it is trivially solvable in polynomial time.
%\end{framed}

\hrulefill

%%% MULTIWAY CUT

%\begin{framed}
\problemlinealt{Multiway Cut
  \cite{ACGKMP:99}}{\proj{E}}{$\Oh(t^\tau n)$}{$\Oh(t^\tau n)$}

\textsc{Instance:} Graph $G=(V,E)$, an integer $t \in \NN$ and $t$ vertices 
$s_1, \dots, s_t \in V$

\textsc{Solution:} A partition of $V$ into sets $V_1, \dots, V_t$ such
  that for every $i$ we have $s_i \in V_i$ and the total number of edges 
  between $V_i$ and $V_j$ for $i \neq j$ is minimized.

\textsc{CSP formulation:} $V=[n]$, $D_v = [t]$ for every $v \in V$, $\HH =
\emptyset$, $C_{uv} = \{(i,i) \ | \ i \in [n]\}$ for every edge
$uv \in E$.\hfill \textsc{Min}

\textbf{Comment:} Setting $z_v = i$ models vertex $v$ belonging to the set
$V_i$. Not satisfying the constraint $C_{uv}$ means that the edge $uv$ 
belongs to the multiway cut. 
%\end{framed}

\hrulefill

%%% MAX CUT

%\begin{framed}
  \problemline{Max Cut \cite{ACGKMP:99}}{E}{$2^\tau n$}{$2^\tau n$}

\textsc{Instance:} Graph $G=(V,E)$

\textsc{Solution:} A partition of vertices into two sets $V_1, V_2$ such
that the number of edges between $V_1$ and $V_2$ is maximized.

\textsc{CSP formulation:} $V=[n]$, $D_v = \{0,1\}$ for every $v \in V$,
$\HH = \emptyset$, $C_{uv} = \{(1,0), (0,1)\}$ for every edge $uv \in
E$.\hfill \textsc{Max}

\textbf{Comment:} The values $0,1$ model the vertex belonging to
the set $V_1$ or $V_2$. If we replace maximization by minimization,
the problem
becomes \textsc{Edge Bipartization} (aka \textsc{Edge OCT})
problem which is a parametric dual of \textsc{Max~Cut}.
%\end{framed}

\hrulefill

%%% VERTEX COVER
%\begin{framed}
  \problemline{Vertex Cover \cite{ACGKMP:99}}{V}{$2^\tau n$}{$2^\tau n$}

\textsc{Instance:} Graph $G=(V,E)$

\textsc{Solution:} A set of vertices $C \subseteq V$ of minimal size such
that every edge contains a vertex $v \in C$ as at least one of its
endpoints.

\textsc{CSP formulation:} $V=[n]$, $D_v = \{0,1\}$ for every $v \in V$,
$H_{uv} = \{(0,0), (0,1), (1,0)\}$ for every edge $uv \in
E$, $C_v = \{1\}$.\hfill \textsc{Min}

\textbf{Comment:} The values $0,1$ model the vertex belonging to
$C$ or $V \setminus C$. If we replace maximization by minimization, 
the problem becomes \textsc{Independent Set} problem which is a parametric dual of
\textsc{Vertex~Cover}.
%\end{framed}

\hrulefill
%%% OCT

%\begin{framed}
\problemlinealtt{Odd Cycle Transversal \cite{LMS:11}}{\proj{OCT} $\circ$
  \proj{2}}{$\Oh(3^\tau
  n)$}{$\Theta(3^\tau n)$}
%\cite{Yannakakis:78}

\textsc{Instance:} Graph $G=(V,E)$

\textsc{Solution:} A subset of vertices $W \subseteq V$ of minimal size 
such that $G[V \setminus W]$ is a bipartite graph.

\textsc{CSP formulation:} $V=[n]$, $D_v = \{0,1,2\}$ for every $v \in V$,
$H_{uv} = \{0,1,2\}^2 \setminus \{(0,0), (1,1)\}$ for every edge $uv \in
E$, $C_v = \{0,1\}$ for every $v \in V$.\hfill \textsc{Min}

\textbf{Comment:} The values $0,1,2$ model the vertex belonging
to either the first or the second partite of a bipartite graph, or the
deletion set $W$. Satisfying the constraint $C_v$ corresponds to not
putting $v$ in the deletion set $W$. Also known as \textsc{Vertex Bipartization}.
The projection $\textrm{proj}_{OCT}:P(Q)\rightarrow  \{0,1\}^V$ is defined as follows: 
% $\textrm{proj}_{OCT}(y_1^0,y_1^1,y_1^2,y_2^0,y_2^1,y_2^2,\ldots,
%y_n^0,y_n^1,y_n^2,\gg)=(y_1^2,y_2^2,\ldots,y_n^2)$.
 $\textrm{proj}_{OCT}(y_1^0,y_1^1,y_1^2,y_2^0,y_2^1,y_2^2,\ldots,
y_n^0,y_n^1,y_n^2,\gg)=(y_1^2,y_2^2,\ldots,y_n^2)$.

%\end{framed}

\endgroup

%%%%%%%%%%%%%%%%%%%%%%%%%%%%%%%%%%%%%%%%%%%%%%%%%%%%%%%%%%%%%%%
\section{Open problems} 
%\paragraph{Open problems.} 
%%%%%%%%%%%%%%%%%%%%%%%%%%%%%%%%%%%%%%%%%%%%%%%%%%%%%%%%%%%%%%%
A natural research direction is to examine more closely the extension
complexity for CSP and the specific graph problems on graphs with
bounded treewidth, in particular, what are the best possible upper bounds? 
\iffalse
Especially the
correspondence between our extension complexity upper bounds and
existing time complexity lower bounds in Section \ref{sec:problems}
warrants a question: what is a natural problem whose time complexity
does not match the extension complexity of its natural polytope when
parameterized by treewidth?
\fi
\vspace{-2mm}
\paragraph{Acknowledgments.} The  authors thank Hans Raj Tiwary and
Ji\v r\'{\i} Sgall for stimulating discussions.

\bibliographystyle{abbrv}
\bibliography{lcuts}

\end{document}